
\documentclass[11pt]{article}
\usepackage{graphics}
\usepackage{latexsym}
\usepackage{color}
\usepackage{epsfig}

\usepackage{amsthm}
\usepackage{graphicx}
\usepackage{color}
\usepackage{amsmath}
\usepackage{amssymb}
\usepackage{amscd}
\usepackage{latexsym}
\usepackage{graphics}
\usepackage{subfigure}
\usepackage{url}
\usepackage{afterpage,longtable,lscape}
\usepackage{verbatim}  

\setlength{\oddsidemargin}{0.0in}
\setlength{\evensidemargin}{0.0in}
\setlength{\topmargin}{0.0in}
\setlength{\textheight}{9.0in}
\setlength{\textwidth}{6in}

\theoremstyle{plain}
\newtheorem{theorem}{Theorem}[section]
\newtheorem{lemma}[theorem]{Lemma}
\newtheorem{corollary}[theorem]{Corollary}

\DeclareMathOperator{\rev}{rev}

\begin{document}

\title{In-place permuting and perfect shuffling using involutions}

\author{
   QINGXUAN YANG\footnotemark[1]
   \and JOHN ELLIS\footnotemark[2] \and KHALEGH MAMAKANI\footnotemark[2] \and FRANK RUSKEY\footnotemark[2]
   }

\maketitle

\renewcommand{\thefootnote}{\fnsymbol{footnote}}
\footnotetext[1]{Department of Computer Science, Peking University, China.}
\footnotetext[2]{Department of Computer Science, University of Victoria, Canada.}
\renewcommand{\thefootnote}{\arabic{footnote}}

\begin{abstract}
Every permutation of $\{1,2,\ldots,N\}$ can be written as the product of
  two involutions.
As a consequence, any permutation of the elements of an array can be performed in-place
  using simultaneous swaps in two rounds of swaps.
In the case where the permutation is the $k$-way perfect shuffle we develop two methods for
  efficiently computing the pair of involutions that accomplishes these swaps.

The first method works whenever $N$ is a power of $k$; in this case the time is
  $O(N)$ and space $O(\log^2 N)$.
The second method applies to the general case where $N$ is a multiple of $k$;
  here the time is $O(N \log N)$ and the space is $O(\log^2 N)$.
If $k=2$ the space usage of the first method can be reduced to $O(\log N)$ on a machine that
  has a SADD (population count) instruction.
\end{abstract}

\noindent
\textbf{keywords: }{Perfect shuffle, merging, sorting, permutation, involution.}


\section{Introduction}

Certain useful permutations on strings, represented by linear arrays of $N$ elements, can be realised,
without difficulty, in $O(N)$ time and using only $O(\log N)$ extra space to store a constant number of variables.
Examples are reversals and cyclic shifts.
However, the permutation known as the perfect shuffle is not so easily achieved within the given time and space
  constraints.

The perfect shuffle is defined on strings of even length, say $N$.
Assuming the string elements occupy positions 0 through $N-1$, the element occupying
position $i$ is translated to position $2i \mod N-1$.
The effect is illustrated in Figure \ref{shuffleExample}.
This version of the perfect shuffle is called an \emph{in-shuffle}; the end elements
are not moved.
The version in which they are moved is called an \emph{out-shuffle}.
It is only necessary to analyse one form and in this paper we only consider the in-shuffle.
In the generalised, $k$-way, shuffle, the element occupying position $i$ is translated
to position $ki \mod N-1$, where $N = kM$ for some integer $M$.

\begin{figure}[h]
\label{shuffleExample}
\begin{center}
\textbf{a b c d e f 1 2 3 4 5 6 $\rightarrow$ a 1 b 2 c 3 d 4 e 5 f 6} \\
\end{center}
\caption{A Perfect In-shuffle on 12 elements.}
\end{figure}

The perfect shuffle arises in several computer science applications.
For example, certain parallel processing algorithms are conveniently executed
on a perfect-shuffle-based architecture \cite{IB-A945016} and
there is an \emph{in-place} merging algorithm \cite{Ellis:2000:ISMPS,DAT:2011} which reduces
the list merging problem to the problem of realising the perfect shuffle (see also \cite{ESA}).

Shuffle algorithms that conform to the $O(N)$ time and $O(\log N)$ space constraints have been
described in \cite[Section 7]{Ellis:2000:ISMPS} and in \cite{PJ:2008}.
These solutions use a ``cycle leader'' algorithm which requires the generation of
``leader'' for each of the cycles that make up the permutation.
An analysis of the cycles comprising the perfect shuffle is given in \cite{Ellis:2000:CCPSP} and
in \cite{EFS:2002} the $k$-way perfect shuffle on a deck of size $N = kM$ is discussed.

Our methods are not cycle leader algorithms; rather they are based on identifying pairs of independent elements that swap
positions.
These swaps occur in two rounds, and thus each element participates in no more that two swaps.
The resulting configurations can be represented by diagrams that are reminiscent of comparator networks used
  for sorting, but now the comparators always swap their elements.
One might imagine that this would be a good way to permute data in hardware since the configurations can be
  laid out on two layers, with the ``wires" on one layer going horizontally, and going vertically in the other
  layer.  See Figure \ref{fig:examples}.

We first present a general strategy for permuting in-place in two rounds of swaps (Section 2)
  and then present two instantiations
  of that strategy (Sections 3 and 4) as applied to the perfect shuffle permutation.
The two methods are sufficiently simple and efficient that they may provide a viable
  alternative to known methods.
The first method works whenever $N$ is a power of $k$; in this case the time is
  $O(N)$ and space $O(\log^2 N)$.
If $k=2$ we also show how to adapt it, still in $O(N)$ time, for general $N$ (but not in 2 rounds of swaps).
The second method applies to the general case where $N$ is only assumed to be a multiple of $k$;
here the time is $O(N \log N)$ and the space is $O(\log^2 N)$.
The space usage of the first method can be reduced to $O(\log N)$ on a machine that
has a SADD (population count) instruction.
The time complexity measures just stated assume that swap, assignment, relational operations and
  elementary arithmetic operations all take $O(1)$ time.
For space complexity we count the number of bits required to represent a variable.
Furthermore, all the swaps in each phase are independent
  and so, in parallel, the perfect shuffle can be achieved in $O(1)$ time.

\section{General permutations}

An \emph{involution} is a permutation that is its own inverse.
It is well-known that every permutation can be written as the product of disjoint cycles.
Every involution is the disjoint product of 1-cycles and 2-cycles.
If the permutation consists of one cycle, then we call it a \emph{circular permutation}.

We will prove in this section that every permutation is the product of two involutions.
Thus every permutation can be performed in two rounds of (simultaneous) swaps, or sequentially by having
each element participate in at most two swaps.
Permuting in place has been considered previously \cite{Fich:1995:PP}.
It has been noted before that every permutation is the product of two sets of transpositions, i.e. cycles of length 2.
See, for example, exercise 10.1.17 in \cite{Scott:1964}, although we do not know the originator of this result.
In the theorem below we also count the number of factorizations of a circular permutation as a product
of involutions.
This result was also found recently and independently in \cite{PetersenTenner}, see their Corollary 2.5.

\begin{theorem}
Every circular permutation of $n$ elements can be written as the product of two involutions.
Furthermore, there are exactly $n$ such factorizations if $n > 2$.
\label{thm:circinv}
\end{theorem}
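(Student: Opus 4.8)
The plan is to convert the factoring problem into a conjugation problem and then count solutions. Write the given circular permutation as $c$, and suppose first that $c = \sigma\tau$ with $\sigma$ and $\tau$ involutions. Since each factor is its own inverse, $c^{-1} = (\sigma\tau)^{-1} = \tau\sigma = \sigma(\sigma\tau)\sigma = \sigma c\sigma^{-1}$, so $\sigma$ is an involution that conjugates $c$ to $c^{-1}$, and $\tau = \sigma c$ is then forced by $\sigma$. Conversely, I would check that for \emph{any} involution $\sigma$ with $\sigma c\sigma^{-1} = c^{-1}$ (equivalently $\sigma c\sigma = c^{-1}$, since $\sigma = \sigma^{-1}$), the permutation $\tau := \sigma c$ is again an involution, because $\tau^2 = \sigma c\sigma c = (\sigma c\sigma)c = c^{-1}c = e$, while $\sigma\tau = \sigma\sigma c = c$. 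This sets up a bijection $\sigma \mapsto (\sigma,\sigma c)$ between factorizations $c = \sigma\tau$ into two involutions and the involutions $\sigma$ satisfying $\sigma c\sigma^{-1} = c^{-1}$, so it suffices to produce and count the latter.

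Next I would use relabeling to identify the $n$ elements with $\mathbb{Z}_n$ so that $c$ is the rotation $c(i) = i+1 \pmod n$ and $c^{-1}(i) = i-1 \pmod n$. The conjugation condition $\sigma c = c^{-1}\sigma$ then becomes the recurrence $\sigma(i+1) = \sigma(i) - 1$ for all $i$. Solving it gives $\sigma(i) = \sigma(0) - i \pmod n$, so every conjugator has the form $\sigma_b(i) = b - i \pmod n$ for some $b \in \mathbb{Z}_n$, and conversely each such $\sigma_b$ is a genuine permutation of $\mathbb{Z}_n$ satisfying the equation. These are exactly $n$ distinct maps, one for each value of $b$, since $\sigma_b(0) = b$.

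Finally I would observe that each $\sigma_b$ is automatically an involution, as $\sigma_b(\sigma_b(i)) = b - (b - i) = i$. Hence all $n$ conjugators are involutions, which at once proves the existence claim (take, for instance, $b = 0$) and yields exactly $n$ factorizations. The hypothesis $n > 2$ would be addressed last: for $n \le 2$ some reflections degenerate (for example $\sigma_0$ is the identity when $n \le 2$, and $c = c^{-1}$ when $n = 2$), and these boundary cases I would simply settle by direct inspection rather than through the generic count.

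The conceptual core, and the step I expect to require the most care, is the identification of the conjugators of $c$ with the reflections $\sigma_b$: one must argue that the recurrence $\sigma(i+1) = \sigma(i) - 1$ admits no solutions beyond the $n$ reflections and that each $\sigma_b$ is a well-defined bijection of $\mathbb{Z}_n$. The surrounding group-theoretic bookkeeping — verifying that $\sigma \mapsto (\sigma,\sigma c)$ is a bijection and keeping the order of composition consistent throughout — is routine but must be handled carefully so that the final tally comes out to exactly $n$.
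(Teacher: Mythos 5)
Your proposal is correct, but it reaches the result by a genuinely different route than the paper. The paper proceeds constructively: it writes down the explicit involutions $I_k$ (in two-cycle notation these are exactly your reflections $\sigma_b(i) = b - i \bmod n$), verifies directly that $C_n = I_k I_{k-1}$ for each $k$, and then establishes uniqueness by a combinatorial chase --- starting from the 2-cycle $(0\ k)$ of $S$ it alternately forces 2-cycles of $T$ and $S$ until both involutions are completely determined. You instead convert the two-unknown factorization problem into a one-unknown conjugation problem via the observation that $c = \sigma\tau$ with $\sigma,\tau$ involutions if and only if $\sigma$ is an involution satisfying $\sigma c \sigma^{-1} = c^{-1}$ and $\tau = \sigma c$; then the functional equation $\sigma(i+1) = \sigma(i) - 1$ identifies \emph{all} conjugators as the $n$ reflections of $\mathbb{Z}_n$, which happen to be involutions automatically. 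This buys you existence, uniqueness, and the count of $n$ in a single stroke, and it makes the underlying dihedral-group structure transparent (the factorizations are precisely the rotation-reflection decompositions in the dihedral group of order $2n$); the separate uniqueness argument the paper needs simply evaporates, since your recurrence has no solutions other than the reflections. What the paper's approach buys in exchange is self-containedness --- no appeal to conjugation or to solving an equation over $\mathbb{Z}_n$, just explicit 2-cycle bookkeeping --- and it hands the reader the involutions in the disjoint-2-cycle form that motivates the swap-network perspective used in the rest of the paper. Your handling of the $n > 2$ hypothesis (for $n \le 2$ some reflections degenerate to the identity and $c = c^{-1}$ when $n = 2$, so the boundary cases need direct inspection) matches the reason the paper imposes that hypothesis, and the bookkeeping you flag as needing care (order of composition, bijectivity of $\sigma \mapsto (\sigma, \sigma c)$) is indeed routine.
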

\begin{proof}
Since we can relabel the elements of a cycle, it is sufficient to prove the result for $C_n$,
where $C_n$ is the cyclic shift of $0,1, \ldots, n-1$ written in cycle notation as
$C_n = (0\ 1\ \cdots \ n{-}2\ n{-}1)$.
Define  $I_k$ to be the involution
\[
I_k := (0\ k)(1\ k-1) \cdots (\lfloor k/2 \rfloor\ \lceil k/2 \rceil)\
(k{+}1\ n{-}1)(k{+}2\ n{-}2) \cdots ( \lfloor (k+n)/2 \rfloor\ \lceil (k+n)/2 \rceil) \]
It is easy to check that for, $k = 0,1,\ldots,n-1$, the the following equality holds
(where $k-1$ is interpreted as $n-1$ when $k = 0$).
\begin{equation}
C_n = I_{k} I_{k-1}.
\end{equation}
Furthermore, we show below that there are no other factorizations of $C_n$ into two involutions.

Suppose that $C_n = ST$ where $S$ and $T$ are involutions, which we will think of
as products of disjoint 2-cycles, where a 2-cycle can be degenerate in the sense of
being of the form $(x\ x)$.
The involution $S$ must have a 2-cycle $(0\ k)$ for some $k$.
But then $T$ must have the 2-cycle $(0\ k{-}1)$ in order that $ST(k{-}1) = S(0) = k$.
And then $S$ must have the 2-cycle $(1\ k{-}1)$ in order that $ST(0) = S(k{-}1) = 1$.
Continuing in this manner, alternately inferring 2-cycles in $S$ and $T$ we conclude
that $S$ contains $(0\ k)(1\ k-1) \cdots (\lfloor k/2 \rfloor\ \lceil k/2 \rceil)$
and that $T$ contains $(0\ k{-}1)(1\ k-2) \cdots (\lfloor (k{-}1)/2 \rfloor\ \lceil (k{-}1)/2 \rceil)$.

We now claim that $T$ must have the 2-cycle $(k\ n{-}1)$ in order that $ST(n{-}1) = S(k) = 0$.
But then $S$ must have the 2-cycle $(k{+}1\ n{-}1)$ in order that $ST(k) = S(n{-}1) = k{+}1$.
Continuing in this manner, alternately inferring 2-cycles in $T$ and $S$ we conclude eventually
that $T$ contains $(k\ n{-}1)(k{+}1\ n{-}2) \cdots ( \lfloor (k+n-1)/2 \rfloor\ \lceil (k+n-1)/2 \rceil)$
and $S$ contains $(k{+}1\ n{-}1)(k{+}2\ n{-}2) \cdots ( \lfloor (k+n)/2 \rfloor\ \lceil (k+n)/2 \rceil)$.

Thus we have shown that $S = I_k$ and $T = I_{k-1}$.
\end{proof}


\begin{corollary}
Every permutation of $n$ elements can be written as the product of two involutions.
\label{cor:involution}
\end{corollary}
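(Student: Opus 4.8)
The plan is to reduce the general case to the circular case already settled by Theorem~\ref{thm:circinv}. First I would invoke the fundamental decomposition of every permutation $\pi$ of an $n$-element set into disjoint cycles, writing $\pi = C^{(1)} C^{(2)} \cdots C^{(m)}$, where each $C^{(j)}$ acts as a circular permutation on its support $A_j$ and the supports $A_1, \ldots, A_m$ partition the ground set. Fixed points are treated as length-one cycles, which are themselves (trivial) involutions.

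Next, for each $j$ I would apply Theorem~\ref{thm:circinv} to the restriction of $C^{(j)}$ to $A_j$. Since relabelling the elements of a cycle is harmless, that theorem supplies involutions $S_j$ and $T_j$, each supported on $A_j$, with $C^{(j)} = S_j T_j$. It is worth noting that the existence half of Theorem~\ref{thm:circinv}, namely the identity $C_n = I_k I_{k-1}$, holds for \emph{every} cycle length (only the count of factorizations required $n>2$), so this step is available even for cycles of length one or two.

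Then I would assemble the global factors by setting $S = S_1 S_2 \cdots S_m$ and $T = T_1 T_2 \cdots T_m$. The one point to check carefully, and the only real content beyond the theorem, is that these behave correctly because permutations with disjoint supports commute. Concretely, each $S_j$ is an involution and the $S_j$ have pairwise disjoint supports, so $S^2 = \prod_j S_j^2$ is the identity and $S$ is an involution; the same reasoning applies to $T$. Using disjointness again to rearrange the factors freely, $ST = \prod_j S_j T_j = \prod_j C^{(j)} = \pi$.

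I expect the main obstacle to be bookkeeping the commutativity argument rather than any genuine difficulty: one must confirm that reordering the interleaved product $S_1 \cdots S_m T_1 \cdots T_m$ into $(S_1 T_1)(S_2 T_2) \cdots (S_m T_m)$ is legitimate, which holds because $S_i$ and $T_j$ act on the disjoint sets $A_i$ and $A_j$ whenever $i \neq j$. Once that is in place the corollary follows at once.
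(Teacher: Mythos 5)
Your proof is correct and follows essentially the same route as the paper: decompose $\pi$ into disjoint cycles, factor each cycle via Theorem~\ref{thm:circinv}, and use disjointness to commute the factors into the form $(S_1\cdots S_m)(T_1\cdots T_m)$, each factor being an involution. The extra observations you add (fixed points as trivial cycles, and that the existence half of the theorem needs no restriction on cycle length) are sound and only make explicit what the paper leaves implicit.
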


\begin{proof}
This follows from Theorem \ref{thm:circinv} since it is well-known that every permutation
can be written as the product of disjoint cycles, and the product of
disjoint involutions is again an involution.  In more detail, suppose that
$\pi = C_1 C_2 \cdots C_k$ is the disjoint cycle decomposition of a permutation $\pi$.
By the theorem, $\pi = (S_1 T_1)(S_2 T_2) \cdots (S_k T_k)$ where $S_i$ and $T_i$ are
involutions.  Since they are disjoint, $T_i S_j = S_j T_i$ whenever $i \neq j$.
Thus $\pi = (S_1 S_2 \cdots S_k)(T_1 T_2 \cdots T_k)$, the product of two involutions.
\end{proof}

In order to apply this theorem and its corollary the cycles of the permutation need to be identified.
In the next sections we will show how to compute the two involutions by two separate methods when the permutation
is the perfect shuffle.

\section{Computing perfect shuffles}

Initially we assume that $N = k^n$, where we are doing a $k$-way perfect shuffle.
Let $i = (b_{n-1} \cdots b_1 b_0)_k$ be an $n$-digit $k$-ary number.
Define $\rev_p(i)$ to be the integer whose base-$k$ representation is the same as that of $i$
except that the $p$ least significant digits are reversed.
For example, in binary, $\rev_3(44) = \rev_3((101100)_2) = (101001)_2 = 41$.
Since $\rev_p( \rev_p(i) ) = i$, the function $\rev_p(i)$ is an involution.

Now consider the map $\Xi(i) := \rev_n( \rev_{n-1}(i))$ operating on the $n$-digit $k$-ary number $i$.
Let $j = b_{n-1}$.  Note that
\begin{eqnarray*}
\Xi(i)
   & = & \rev_n( \rev_{n-1}( (j b_{n-2} \cdots b_1 b_0)_k )) \\
   & = & \rev_n( (j b_0 b_1 \cdots b_{n-2})_k ) \\
   & = & (b_{n-2} \cdots b_1 b_0 j)_k \\
   & = & ki - j(k^n -1).
\end{eqnarray*}
Hence,
\[
\Xi(i) \cong ki \bmod{(k^n-1)},
\]
and thus $\Xi(i)$ is the $k$-way ``perfect shuffle" permutation of $\{0,1,\ldots,k^n-1\}$.
The perfect shuffle can then be seen as two sets of swaps, where the first performs
$\rev_{n-1}$ and the second performs $\rev_{n}$.
Since all of the swaps in each of the two phases are independent,
  in parallel, the perfect shuffle can be achieved in $O(1)$ time.
Example networks are shown in Figure \ref{fig:examples}.
There the digits in the right side column represent $i$, and those in the left side
represent the inverse of the function.

\begin{figure}[t]
\begin{center}
\scalebox{0.40}{\input{YE-2-4.pstex_t}}
\hspace{0.5in}
\scalebox{0.33}{\input{YE-3-3.pstex_t}}
\end{center}
\caption{A network corresponding to the algorithm for (a)
  $k = 2$, $n=4$ on the left, and (b) $k=3$, $n=3$ on the right.}
\label{fig:examples}
\end{figure}

As a sequential algorithm one can execute \textsc{revSwap}$(n,k,n-1)$; \textsc{revSwap}$(n,k,n)$
where \textsc{revSwap}$(n,k,t)$ is given below.
(The $a :=: b$ is the notation introduced by Knuth and used in Sedgewick \cite{Sedgewick} to
denote the swap or exchange operation, i.e. the values of $a$ and $b$ are exchanged.)

The pseudo-code below outlines our algorithm. \\

\begin{tabular}{lcr}
\begin{minipage}[t]{2in}
\begin{tabbing}
xxx \= xxx \= xxx \kill
\textsc{Shuffle}($n,k$) \\
\> \textsc{revSwap}$(n,k,n-1)$ \\
\> \textsc{revSwap}$(n,k,n)$ \\
\end{tabbing}
\end{minipage}
& \hspace{0.3in} &
\begin{minipage}[t]{3in}
\begin{tabbing}
\textsc{revSwap}$(n,k,t)$ \\
xxx \= xxx \= xxx \kill
\> \textbf{for} $i \in \{0,1,\ldots,k^n-1\}$ \textbf{do} \\
\> \> $j := \rev_t(i)$ \\
\> \> \textbf{if} $i<j$ \textbf{then} $A[i] :=: A[j]$ \\
\end{tabbing}
\end{minipage}
\end{tabular}

To generate the $(i,\rev_n(i))$ pairs,
we use the following lemma to adjust $\rev_n(i))$ appropriately as $i$
is incremented.

\begin{lemma}
Let $p$ be the leftmost digit that changes in the $k$-ary representation of $i$ when $i$ is incremented by one.
Then
\[
\rev_n(i+1) = \rev_n(i) - k^n + k^{n-p} + k^{n-p-1}.
\]
\end{lemma}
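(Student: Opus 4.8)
The plan is to work directly from a closed form for $\rev_n$ as a weighted digit sum and to track exactly which digits change under incrementation. Writing $i = (b_{n-1} \cdots b_1 b_0)_k$, reversal sends the digit $b_j$ sitting in position $j$ (worth $k^j$) to position $n-1-j$ (worth $k^{n-1-j}$), so that
\[
\rev_n(i) = \sum_{j=0}^{n-1} b_j\, k^{\,n-1-j}.
\]
First I would make precise what $p$ is. Incrementing $i$ by one flips a trailing block of digits equal to $k-1$ down to $0$ and raises the next digit by one: concretely $b_0 = b_1 = \cdots = b_{p-1} = k-1$ and $b_p \le k-2$, the digit $b_p$ becomes $b_p+1$, the digits $b_0,\ldots,b_{p-1}$ become $0$, and every higher digit is unchanged. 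The leftmost (most significant) position among those that change is then exactly position $p$, which fixes the meaning of $p$ in the statement.

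Next I would compute the difference $\rev_n(i+1) - \rev_n(i)$ term by term using the display above. Since reversal gives position $j$ the weight $k^{n-1-j}$, the change $b_p \mapsto b_p+1$ contributes $+k^{\,n-1-p}$, each change $b_j \mapsto 0$ for $j = 0,\ldots,p-1$ contributes $-(k-1)k^{\,n-1-j}$, and all remaining terms cancel. Hence
\[
\rev_n(i+1) - \rev_n(i) = k^{\,n-1-p} - (k-1)\sum_{j=0}^{p-1} k^{\,n-1-j}.
\]

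The one calculation worth doing carefully is the geometric sum, which telescopes cleanly: $(k-1)\sum_{j=0}^{p-1} k^{\,n-1-j} = k^{\,n-p}(k^{p}-1) = k^{n} - k^{\,n-p}$. Substituting this into the displayed difference yields $\rev_n(i+1) - \rev_n(i) = k^{\,n-1-p} + k^{\,n-p} - k^{n}$, which rearranges to the claimed identity. I do not expect a genuine obstacle here; the only points requiring care are the bookkeeping of the indexing convention — checking that $p$, defined as the position of the leftmost changing digit, lines up with the exponents $n-p$ and $n-p-1$ after reversal — and the boundary case $p=0$, where there is no trailing block, the sum is empty, and the formula correctly reduces to $\rev_n(i+1) - \rev_n(i) = k^{\,n-1}$.
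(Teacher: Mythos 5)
Your proof is correct and follows essentially the same route as the paper's: both identify that incrementing $i$ turns a trailing block of $p$ digits equal to $k-1$ into zeros and raises the digit at position $p$ by one, then compute the difference of reversals term by term using the reversed weights $k^{n-1-j}$ and sum the resulting geometric series. Your write-up is slightly more explicit about the closed form $\rev_n(i)=\sum_j b_j k^{n-1-j}$ and the empty-sum case $p=0$, but the computation is the same.
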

\begin{proof}
As a $k$-ary number, for some $0 \le j < k-1$, the numbers $i$ and $i+1$ have the form
\[
i = (\cdots j \underbrace{(k{-}1) (k{-}1) \cdots (k{-}1)}_{p} )_k
\ \ \text{ and }\ \  i{+}1 = (\cdots (j{+}1) \underbrace{00 \cdots 0}_{p} )_k .
\]
Thus
\begin{equation*}
\begin{split}
\rev_n(i) - \rev_n(i{+}1) & = \left( j k^{n-p-1} + (k{-}1)(k^{n-p}+\cdots+k^{n-1}) \right) - \left( (j+1) k^{n-p-1} \right) \\
& =  - k^{n-p-1} + (k{-}1) \frac{ k^n - k^{n-p} }{k{-}1} \\
& =  k^n - k^{n-p-1} - k^{n-p}.
\end{split}
\end{equation*}
\end{proof}


We can realise the computation defined by this Lemma, ignoring the computation of $p$ for the moment,
by storing a table of the $n$ powers of $k$, using $O(\log^2 N)$ bits, plus $O(1)$ bits for a constant number
of variables, and in constant time per pair.

The successive $p$ values can be computed by simulating the incrementation of $i$
on a $k$-ary counter and noting which is the leftmost bit to change.
This process is constant time per incrementation, when amortised over all values of $i$.
As a function of $i$ when $k=2$, the value $p$ is called the ``ruler function", see  \cite{Knu:2011:ACP}, and can
be computed in various other ways.
On a machine with a ``population count" or SADD instruction (e.g., the MMIX machine of Knuth \cite{Knu:2011:ACP}),
it can be computed in a constant number of machine operations per incrementation.
It can also be implemented by using a cast on floating point numbers to extract $\lfloor \lg n \rfloor$,
  as pointed out in Knuth \cite{Knu:2011:ACP} on page 142 (see equation (55)).  
We implemented this approach in \texttt{C} on a
machine with a $2.0$ GHz Opteron processor and found that it computed the required swaps in
under a minute when $N = 2^{30} = 1,073,741,824$.  The program is available at
\url{http://webhome.cs.uvic.ca/~ruskey/Publications/Shuffle/PerfectShuffle.html}.


How many swaps are done by the algorithm?
If $n$ is even then the number of swaps in the first phase is $k(k^{n-1}-k^{n/2})/2$
and the number in the second phase is $(k^{n}-k^{n/2})/2$.
If $n$ is odd then the number of swaps in the first phase is $k(k^{n-1}-k^{(n-1)/2})/2$
and the number in the second phase is also $(k^{n}-k^{(n+1)/2})/2$.
In the even case the total is $k^n-(k+1)k^{n/2}/2$ and in the odd case
the total is $k^n - k^{(n+1)/2}$.  Thus in either case the number of swaps is
  $N + O(\sqrt{N})$.

\subsection{Modification for the cases where $N$ is not a power of $k$}

The preceding analysis solves the problem for the case where the number of elements, $N$, is a power of $k$.
The general case, where $N$ is not a power of $k$, but still we have $N = kM$, can be reduced to the special case using the technique
described in \cite[Section 7]{Ellis:2000:ISMPS}, at least when $k = 2$.

The technique may be understood by means of an example.
Suppose that $N = 2M = 30 = 2\cdot 15$.  Think of the two initial sequences of
15 elements as being broken down into subsequences of elements whose sizes are
powers of 2 listed in decreasing order according to the binary representation of $M$.
Proceeding from most significant bit to least significant bit we can do a
series of rotations until the subsequences are in their proper positions
as they would appear in the perfect shuffle.  This process is illustrated below.

\medskip
\noindent
Initial string (spaces are only for clarity): \\
\texttt{ssssssss tttt uu v ssssssss tttt uu v} \\
Move 8 \texttt{s}s: \\
\texttt{ssssssss \underline{ssssssss tttt uu v} tttt uu v} \\
Move 4 \texttt{t}s: \\
\texttt{ssssssss ssssssss tttt \underline{tttt uu v} uu v} \\
Move 2 \texttt{u}s: \\
\texttt{ssssssss ssssssss tttt tttt uu \underline{uu v} v} \\

We would now apply our previous perfect shuffle algorithm to the \texttt{s}s, the
 \texttt{t}s and the \texttt{u}s.
These smaller perfect shuffles could be done in parallel.

In the general case, there is no need to do a rotation if the corresponding bit in the
binary representation of $M$ is 0.
The elements that were moved in each of the four rotations in the example above are underlined.
Assuming that $M = (b_{s} \cdots b_1 b_0)_2$, the number of element changes, $C$, that occur in the successive moves is
\[
C = \sum_{i=1}^{s} b_i (b_{i} \cdots b_1 b_0)_2 \le \sum_{i=1}^{s} (b_{i} \cdots b_1 b_0)_2
  \le \sum_{i \ge 1} \left\lfloor \frac{M}{2^i} \right\rfloor \le N.
\]
Thus the running time is still $O(N)$ and the space is still $O(\log^2 N)$.
However, the totality of the rotations can not be done with a constant number of rounds of swaps,
and the extension of this idea to $k > 2$ is non-trivial.
So we seek some other method to handle the case where $N$ is not a power of $k$.

\section{A number-theoretic approach for the $k$-way perfect shuffle, valid for any $N$}

In this section we make no restriction on $N$, other than it be a multiple of $k$.
We will show how to use some elementary number theory to obtain explicit expressions
  for the two involutions guaranteed by Corollary \ref{cor:involution}.

Let $g = \gcd(x,m)$.
Define $J_r : \mathbb{Z}_m \rightarrow \mathbb{Z}_m$ by
\[
J_r(x) = g \left( r \left( \frac{x}{g} \right)^{-1} \bmod{\frac{m}{g}} \right).
\]

The following lemma has two important consequences:
\begin{itemize}
\item
If $\gcd(r,m) = 1$, then $J_r$ is an involution.
\item
If $m = kn-1$, then $\gcd(k,m) = 1$ and thus
  $J_k(J_1(x)) = kx$ for all $x \in \mathbb{Z}_m$.
That is, $J_k(J_1(x))$ computes the $k$-way perfect shuffle.
\end{itemize}

\begin{lemma}
If $\gcd(r,m) = \gcd(s,m) = 1$, then
\[
J_r(J_s(x)) = x r \left( s^{-1} \bmod{\frac{m}{g}} \right) \bmod{m}.
\]
\end{lemma}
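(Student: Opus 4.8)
The plan is to exploit the key structural property that the map $J_s$ preserves greatest common divisors with $m$: if $g = \gcd(x,m)$ and $\gcd(s,m)=1$, then $\gcd(J_s(x),m) = g$ as well. Once this \emph{gcd-invariance} is in hand, evaluating $J_r(J_s(x))$ becomes a matter of unfolding the definition, because the inner $g$ used to compute $J_s(x)$ and the outer $g$ used to apply $J_r$ to the result are the same number.

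First I would prove the gcd-invariance. Write $g = \gcd(x,m)$, so that $x/g$ is a unit modulo $m/g$. Since $\gcd(s,m)=1$ forces $\gcd(s,m/g)=1$, the quantity $u := s\,(x/g)^{-1} \bmod (m/g)$ is a product of two units modulo $m/g$, hence itself a unit, so $\gcd(u,m/g)=1$. As $J_s(x) = g u$, the elementary identity $\gcd(ga,gb)=g\gcd(a,b)$ gives $\gcd(J_s(x),m) = \gcd(gu, g\,(m/g)) = g\gcd(u,m/g) = g$. The degenerate case $m/g=1$ (that is, $x=0$) is consistent, since then $J_s(x)=0$ and $\gcd(0,m)=m=g$.

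Next I would unfold the composition. Put $y := J_s(x)$ and $v := y/g = s\,(x/g)^{-1} \bmod (m/g)$. By gcd-invariance the evaluation of $J_r(y)$ uses the same $g$, so $J_r(y) = g\,(r\,v^{-1} \bmod (m/g))$. Taking the inverse of the product $v$ modulo $m/g$ yields $v^{-1} \equiv s^{-1}(x/g) \pmod{m/g}$, whence $r v^{-1} \equiv (x/g)\,r\,s^{-1} \pmod{m/g}$. Therefore $J_r(J_s(x)) = g\bigl( (x/g)\,r\,s^{-1} \bmod (m/g) \bigr)$, where $s^{-1}$ denotes the inverse modulo $m/g$.

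Finally I would recast this in the claimed form using the normalization identity $g\,(a \bmod (m/g)) = (g a) \bmod m$, valid because $0 \le a \bmod (m/g) < m/g$ forces $0 \le g\,(a \bmod (m/g)) < m$. Applying it with $a = (x/g)\,r\,s^{-1}$ and cancelling the factor $g$ against $x/g$ gives $J_r(J_s(x)) = x\,r\,s^{-1} \bmod m = x\,r\,(s^{-1}\bmod(m/g)) \bmod m$, as desired. The main obstacle is the gcd-invariance step: it is precisely what guarantees that the two nested invocations of $J$ refer to a common $g$, and without it the composition cannot collapse to a single modular product. Everything following that step is careful but routine modular bookkeeping.
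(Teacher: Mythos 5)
Your proposal is correct and follows essentially the same route as the paper's proof: first establish that $\gcd(J_s(x),m) = \gcd(x,m) = g$ (the paper does this by a chain of gcd identities, you by the equivalent observation that a product of units modulo $m/g$ is a unit), then unfold the composition using this common $g$, invert the inner product modulo $m/g$, and finish with the normalization $g\,(a \bmod (m/g)) = (ga) \bmod m$. Your treatment of the degenerate case $x=0$ and the explicit justification of the normalization step are minor additions in rigor, not a different argument.
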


\begin{proof}
Let $g = \gcd(x,m)$.
First note that, since $\gcd \left( \frac{m}{g}, \left( \frac{x}{g} \right) \bmod \frac{m}{g} \right) = 1$,
we also have $\gcd \left( \frac{m}{g}, \left( \frac{x}{g} \right)^{-1} \bmod \frac{m}{g} \right) = 1$.
Thus
\begin{eqnarray*}
\gcd(m,J_s(x))
& = & \gcd \left( m, g \left( s \left( \frac{x}{g} \right)^{-1} \bmod{\frac{m}{g}} \right) \right) \\
& = & g \cdot \gcd \left( \frac{m}{g}, \left( s \left( \frac{x}{g} \right)^{-1} \bmod{\frac{m}{g}} \right) \right) \\
& = & g \cdot \gcd \left( \frac{m}{g}, s \left( \left( \frac{x}{g} \right)^{-1} \bmod{\frac{m}{g}} \right) \right) \\
& = & g \cdot \gcd \left( \frac{m}{g}, \left( \frac{x}{g} \right)^{-1} \bmod{\frac{m}{g}} \right) \\
& = & g \cdot \gcd \left( \frac{m}{g}, \left( \frac{x}{g} \right) \bmod{\frac{m}{g}} \right) \\
& = & g \cdot \gcd \left( \frac{m}{g}, \frac{x}{g} \right) \ \ = \ g. \\
\end{eqnarray*}
We can now compute the composition:
\begin{eqnarray*}
J_r(J_s(x))
& = & g \left( r \left( \frac{J_s(x)}{g} \right)^{-1} \bmod{\frac{m}{g}} \right) \\
& = & g \left( r \left( \frac{g \left( (s(x/g)^{-1} \bmod{(m/g)} \right) }{g} \right)^{-1} \bmod{\frac{m}{g}} \right) \\
& = & g \left( r \left( s^{-1} \frac{x}{g} \bmod{\frac{m}{g}} \right) \bmod{\frac{m}{g}} \right) \\
& = & g \left( \frac{rx}{g} \left( s^{-1} \bmod{\frac{m}{g}} \right) \bmod{\frac{m}{g}} \right) \\
& = & rx \left( s^{-1} \bmod{\frac{m}{g}} \right) \bmod{m}.
\end{eqnarray*}
\end{proof}

The pseudo-code below outlines our algorithm and examples of its outputs are illustrated
in Figure \ref{fig:examplesA}. \\

\begin{tabular}{lcr}
\begin{minipage}[t]{2in}
\begin{tabbing}
xxx \= xxx \= xxx \kill
\textsc{ShuffleA}($M, k$) \\
\> \textsc{modInv}$(M,1)$ \\
\> \textsc{modInv}$(M,k)$ \\
\end{tabbing}
\end{minipage}
& \hspace{0.3in} &
\begin{minipage}[t]{3in}
\begin{tabbing}
\textsc{modInv}$(M,r)$ \\
xxx \= xxx \= xxx \kill
\> \textbf{for} $i \in \{1,2,\ldots,k-2\}$ \textbf{do} \\
\> \> $g := \gcd(x,kM-1)$ \\
\> \> $j := (r(x/g)^{-1}) \mod ((kM-1)/g))$ \\
\> \> \textbf{if} $i<j$ \textbf{then} $A[i] :=: A[j]$ \\
\end{tabbing}
\end{minipage}
\end{tabular}

\begin{figure}[t]
\begin{center}
\scalebox{0.40}{\input{YE-2-4A.pstex_t}}
\hspace{0.5in}
\scalebox{0.33}{\input{YE-3-3A.pstex_t}}
\end{center}
\caption{A network corresponding to the number theoretic algorithm for (a)
  $k = 2$, $N = 16$ on the left, and (b) $k=3$, $N = 27$ on the right.}
\label{fig:examplesA}
\end{figure}

The involutions $J_1$ and $J_k$ take longer to compute than the
corresponding bit-reversal involutions discussed in the previous section.
In particular, a modular inverse can take time $O(\log N)$ time to compute,
via the extended Euclidean algorithm, giving a total worst case running time of $O(N \log N)$.

However, this will be overly pessimistic in practice since not all $\gcd$ calculations
will be worst case.  If we count the number of arithmetic operations that are actually used
by the algorithm, then the true running time appears \emph{experimentally} to be $O(N \log \log N)$ with
the worst case examples occurring when $N$ is the product of a small number of distinct odd primes.

\section{Final Remarks and Acknowledgement}

Here we mention some open problems.
It would be of interest to find other natural classes of permutations for which the
involutions of Corollary \ref{cor:involution} can be efficiently computed.
Note that the number of swaps used by the first algorithm when
  $N = 27$ and $k = 3$ is 18, but the number used by the second algorithm
  is 20 (See (b) in Figures \ref{fig:examples} and \ref{fig:examplesA}).
For a general $N$ and $k$, what is the least number required?
Finally, what is the expected running time of \textsc{ShuffleA}?

We thank Dominique Gouyou-Beauchamps
  for pointing out an erroneous sentence in an earlier version of this paper.
We also thank the referees for their helpful comments.

\bibliographystyle{plain}
\bibliography{YER_Ellis}

\end{document}